\newtheorem{remark}{Remark}
\newtheorem{theorem}{Theorem}
\newtheorem{corollary}{Corollary}
\newenvironment{proof}{\par\noindent\textbf{Proof.}~}{\hfill$\square$\par}
\begin{document}
\begin{frontmatter}

\title{Natural Gradient Descent for Control} 
% Title, preferably not more than 10 words.

\thanks[footnoteinfo]{This work is supported in part by the National Science Foundation under award ECCS-2227311. \\
Farnaz Adib Yaghmaie is supported by the Excellence Center at Linköping–Lund in Information Technology (ELLIIT), ZENITH, and partially by Sensor informatics and Decision-making for the Digital Transformation (SEDDIT). This work was partly performed within the Competence Center SEDDIT-Sensor Informatics and Decision making for the Digital Transformation, supported by Sweden’s Innovation Agency within the research and innovation program Advanced digitalization.}

\author[First]{Ramin Esmzad} 
\author[Second]{Farnaz Adib Yaghmaie}
\author[First]{Hamidreza Modares} 

\address[First]{Department of Mechanical Engineering, Michigan State University, East Lansing, MI 48824, USA (e-mails: \{esmzadra, modaresh\}@msu.edu).}
\address[Second]{Department of Electrical Engineering, Linköping University, 58183 Linköping, Sweden (e-mail: farnaz.adib.yaghmaie@liu.se)}

\begin{abstract}                % Abstract of 50--100 words
 This paper bridges optimization and control, and presents a novel closed-loop control framework based on natural gradient descent, offering a trajectory-oriented alternative to traditional cost-function tuning. By leveraging the Fisher Information Matrix, we formulate a preconditioned gradient descent update that explicitly shapes system trajectories. We show that, in sharp contrast to traditional controllers, our approach provides flexibility to shape the system's low-level behavior. To this end, the proposed method parameterizes closed-loop dynamics in terms of stationary covariance and an unknown cost function, providing a geometric interpretation of control adjustments. We establish theoretical stability conditions. The simulation results on a rotary inverted pendulum benchmark highlight the advantages of natural gradient descent in trajectory shaping. 
\end{abstract}

\begin{keyword}
Natural Gradient Descent, Trajectory Optimization, Stationary Covariance, Fisher Information Matrix (FIM).
\end{keyword}

\end{frontmatter}
%===============================================================================

\section{Introduction}

Optimization techniques, particularly gradient descent (GD) and its numerous variants~\citep{ruder2017overviewgd,laborde2020lyapunov,rattray1998natural,martens2020new}, have become fundamental in modern control and machine learning. These methods are broadly classified into two categories when applied to control systems: 
\textit{GD-based control}, where gradient methods optimize controller parameters, and 
\textit{controlled GD}, where control-theoretic tools improve the convergence properties of a gradient-based optimizer~\citep{lessard2016analysis,padmanabhan2024analysis,nayyer2022passivity}. 
GD-based methods have demonstrated significant success in learning controllers for uncertain environments~\citep{80202,sutton2018reinforcement,8169685,7752775}, as well as in system identification~\citep{ljung1998system, hardt2018gradient} and adaptive control~\citep{gaudio2019connections,ioannou2012robust,10599619,landau2011adaptive}. 

Despite their effectiveness, a key challenge persists: the closed-loop \textit{trajectory behavior}—how system states evolve over time—is typically an \textit{indirect} outcome of weight tuning or cost-function optimization. Traditional methods rely on defining an objective function, applying an optimization algorithm, and then observing the resulting state trajectories~\citep{Cothren2021DataenabledGF}. If these trajectories do not meet performance expectations, the objective function often requires manual adjustments, leading to a cumbersome iterative tuning process. Furthermore, this process is susceptible to ``reward hacking,'' where the optimized policy achieves a low-cost function value but deviates from the designer’s intended behavior~\citep{skalse2022defining}.

In our previous work~\citep{esmzad2024gdl}, we proposed a fundamentally different perspective by \textit{directly shaping} system trajectories through a \textit{gradient-descent-like closed-loop} approach. We introduced a novel parameterization of the stable closed-loop dynamics
\begin{equation}
    A + B K = I - 2 \Gamma P,
\end{equation}
where \( A ,\: B \) are the system matrices, \( K \) is a feedback gain, and \( \Gamma, P \succ 0 \) represent a GD step and a quadratic Lyapunov cost matrix, respectively. This formulation ensures that the closed-loop system behaves analogously to a GD update applied to the Lyapunov function \( V(x_k) = x_k^\top P x_k \), leading to the explicit trajectory update dynamic 
\begin{equation}
    x_{k+1} = x_k - \Gamma \nabla_{x_k} V(x_k).
\end{equation}
Rather than iteratively adjusting an unknown cost function to produce desired state trajectories, we directly impose a controlled gradient flow on the system states.

This trajectory-oriented perspective offers several advantages. Firstly, it provides explicit trajectory control. The gradient formulation provides direct control over the trajectory shape, eliminating the need for trial-and-error cost function tuning.
% convergence speed and
Secondly, it quantifies robustness by the step size. For a small step size, closed-loop eigenvalues have small variations. 
Lastly, it unifies with a Linear Quadratic Regulator (LQR). We established theoretical connections between our approach and classical optimal control, showing that LQR solutions can be naturally represented in this setting~\citep{esmzad2024gdl}.

One of the key challenges in the proposed approach is the definition of the preconditioning matrix \(\Gamma\), which is generally non-trivial. To address this issue, we use a \textit{natural GD}~\citep{6790500} framework as the update rule for the state-space evolution. We demonstrate that, in this formulation, \(\Gamma\) corresponds to the covariance matrix of the state vector of the closed-loop system, providing a meaningful interpretation: the magnitude of updates in state space is directly influenced by the uncertainty level, as quantified by the covariance matrix. This insight establishes a direct link between control adjustments and the available information in the system. Furthermore, we derive theoretical stability conditions and step-size constraints to ensure robust implementation. The effectiveness of the proposed approach is validated through simulations on the Quanser rotary inverted pendulum benchmark, demonstrating predictable closed-loop behavior compared to the LQR method.

% The remainder of this paper is organized as follows. Section~\ref{sec:preliminaries} provides the system model and an overview of classical gradient descent and natural gradient methods. Section~\ref{sec:methodology} presents our proposed trajectory-oriented control approach and its theoretical foundation. Section~\ref{sec:theory} establishes stability guarantees and optimal tuning strategies. Section~\ref{sec:simulations} presents numerical results comparing our approach with conventional methods, followed by conclusions and future directions in Section~\ref{sec:conclusion}.

\textit{Notation:} Let $\mathbb{R}^{m \times n }$ denote the real linear space for all real matrices with dimensions $m \times n$. $\mathbb{N}$ is the set of natural numbers. A positive (semi) definite matrix $P \in \mathbb{R}^{n \times n}$ is denoted by $P \succ 0,\: (P \succcurlyeq 0)$. The transpose of a matrix $Q$ is denoted by $Q^\top$. The eigenvalues of a square matrix $A \in \mathbb{R}^{n \times n}$ are denoted by $\lambda_i(A),\:i=1,...,n$ and the spectral radius of $A$ is denoted by $\rho(A)=\max\{\vert \lambda_{1}\vert,...,\vert \lambda_n \vert\}$. The gradient of the function $f(x) \in \mathbb{R}$ with respect to $x \in \mathbb{R}^{n }$ is denoted by $\nabla_{x}f(x) \in \mathbb{R}^{n }$.

\section{System Model and Problem Description}\label{sec:preliminaries}
\noindent Consider a linear time-invariant (LTI) stochastic discrete-time system of the form 
\begin{align} 
x_{k+1}=A \,x_k+B \, u_k+w_k,
\label{eq:syst}
\end{align} 
where $k \in \mathbb{N}$, $x_k \in \mathbb{R}^n$ is the system's state, and $u_k \in \mathbb{R}^m$ is the control input. In addition, $A$ and $B$ are transition and input matrices of appropriate dimensions, respectively. The noise of the system $\omega_k \in \mathbb{R}^n$ is governed by a Gaussian distribution $\mathcal{N}(0, W)$ where $W \in \mathbb{R}^{n\times n} \succ 0$ is its covariance matrix. Throughout this paper, a linear control policy is used
\begin{align} 
u_{k}=K \,x_k, \quad K \in \mathbb{R}^{m \times n}.
\label{eq:control}
\end{align} 
The dynamical system \eqref{eq:syst} evolves stochastically due to the additive Gaussian noise \( \omega_k \)~\citep{kalman1960new}. Using the linear control policy in \eqref{eq:control}, the distribution of the state $x_k$ at time $k$ is given by \( x_k \sim \mathcal{N}(\mu_k, \Sigma_k) \),
% with mean $\mu_k$ and covariance $\Sigma_k$, the state at the next time step \( x_{k+1} \) is also Gaussian due to the linear dynamics of the system. Specifically, we have
% \begin{align}
% x_{k+1} \sim \mathcal{N}(\mu_{k+1}, \Sigma_{k+1}),    
% \end{align}
where the mean $\mu_k$ and the covariance $\Sigma_k$ are updated respectively as follows
\begin{align}
\mu_{k+1} &= (A + BK)\mu_k, \\
\Sigma_{k+1} &= (A + BK) \Sigma_k (A + BK)^\top + W.
\end{align}
% This evolution highlights the stochastic nature of the system due to the noise \( \omega_k \), which contributes an additive term \( W\) to the covariance update. 
The term \( (A + BK)\Sigma_k(A + BK)^\top \) captures how the closed-loop dynamics influence the propagation of uncertainty. 

A key challenge in closed-loop controller design using LQR is ensuring that the resulting trajectories exhibit desired convergence properties while minimizing deviations due to disturbances. Traditional methods, such as optimal control and reinforcement learning, typically optimize a predefined cost function to tune \( K \), indirectly shaping the closed-loop behavior. If the resulting optimal control solution is not satisfactory, the cost parameter must be adjusted. However, the map from the cost parameters to the low-level system behavior is unknown, which makes it challenging to achieve the designer's intention by adjusting the cost parameter. In this work, we propose an alternative approach where the closed-loop dynamics are parameterized in a GD-inspired framework. This formulation ensures that the state updates follow a gradient-descent-like behavior. We gain direct control over trajectory shaping rather than relying on cost-function tuning. This approach provides a geometric interpretation of closed-loop control. 
% The remainder of the paper explores the theoretical foundations, algorithmic implementation, and numerical validation of this novel framework.

\subsection{Basics of Gradients}
Gradients play a fundamental role in optimization and control. In the context of machine learning and control systems, gradients are used to iteratively adjust parameters in order to minimize a given cost function. Traditional gradient-based methods rely on first-order derivatives to provide a direction of improvement, while more advanced techniques, such as natural gradients,
% and second-order methods
refine this approach by incorporating information about the underlying geometry of the parameter space.

% In control applications, gradient-based methods can be applied to optimize controller parameters, learn system dynamics, and improve decision-making policies in reinforcement learning. The effectiveness of gradient descent-based methods depends on the choice of step size, the conditioning of the optimization landscape, and the presence of constraints.

\subsection{GD Algorithm}
In the standard GD, the update rule for minimizing a cost function \( J(\theta) \) with respect to parameters \( \theta \) is
\begin{align}
    \theta_{k+1} = \theta_k - \alpha \nabla_{\theta_k} J(\theta_k),
\end{align}
where \( \alpha > 0 \) is the step size. This method works in an Euclidean parameter space, assuming that all directions have equal importance and are scaled uniformly. However, when the parameter space is curved or highly anisotropic, standard GD can suffer from slow convergence~\citep{ruder2017overviewgd}.

\subsection{Natural GD Algorithm}
Natural gradients are an extension of GD that accounts for the geometry of the parameter space~\citep{6790500,ollivier2019ekf}. Instead of using the Euclidean gradient, the natural gradient uses an alternative direction based on the local curvature of the space~\citep{martens2020new,pascanu2014revisitingnaturalgradientdeep}. The natural gradient update is defined as
\begin{align}
\theta_{k+1} = \theta_k - \alpha G(\theta_k)^{-1} \nabla_{\theta_k} J(\theta_k),
\end{align}
where \( G(\theta_k) \) is the Fisher Information Matrix (FIM), defined as
    \begin{align}
    G(\theta_k) = \mathbb{E}_{z_k \sim p_{\theta_k}} \left[ \nabla_{\theta_k} \log p_{\theta_k}(z_k) \nabla_{\theta_k} \log p_{\theta_k}(z_k)^\top \right],
    \label{eq:FIM}
    \end{align}
with \( p_{\theta_k}(z_k) \) being the Probability Distribution Function (PDF) parameterized by \( \theta_k \). \( G(\theta_k)^{-1} \) serves as a preconditioner, transforming the gradient into a coordinate system that respects the curvature of the cost surface. 
% For deterministic systems, where the cost function \( J(\theta_k) \) is explicitly defined without stochastic variations, the FIM is not applicable. Instead, an alternative metric tensor \( G(\theta_k) \) can be derived from the Hessian of the cost function
% \begin{align}
%     G(\theta_k) = \nabla_{\theta_k}^2 J(\theta_k),
% \end{align}
% capturing second-order curvature information of the cost function~\citep{bottou2018optimization}. 
The preconditioning approach significantly improves convergence in ill-conditioned landscapes, ensuring more stable and efficient parameter updates.
% This framework is particularly useful in optimization problems where the cost function has sharp valleys or elongated contours, as it effectively rescales the gradient steps to align with the principal directions of curvature.

% The natural gradient descends along geodesics in the Riemannian space defined by the Fisher metric \( G(\theta) \), rather than along straight Euclidean lines. This results in directional scaling of the gradient, improving convergence in poorly conditioned optimization problems.
% The natural gradient method is particularly useful in situations where the parameter space is curved or anisotropic. 

% Its key advantages include
% \begin{itemize}
% \item \textbf{Faster convergence}: By considering the geometry of the space, the updates adapt better to elongated or distorted level sets.
% \item \textbf{Scale invariance}: The update rule remains effective regardless of parameter scaling, which is not the case for standard gradient descent.
% \item \textbf{Improved stability}: It mitigates the zig-zag behavior that often plagues standard gradient descent in ill-conditioned problems.
% \end{itemize}

\section{From Basic GD to Natural GD in state-space}\label{sec:methodology}
% State-space optimization techniques have gained attention in control and learning applications. Traditionally, gradient descent is used in parameter space to minimize a cost function. However, a novel application of gradient descent in state space has emerged, where system states evolve directly in the direction of the negative gradient of a cost function governing their dynamics. This approach provides a more intuitive geometric interpretation of trajectory shaping and control design.
This section explores the evolution of GD methods applied in state space, beginning from the standard formulation and advancing towards natural gradient techniques that account for system curvature and structure.\vspace{-10pt}
\subsection{Standard GD}
Noise-free linear dynamical systems are considered in \cite{esmzad2024gdl}. GD updates for this case is given by
%the $x_k$ based on the gradient $\nabla_{x_k} V(x_k)$ of a cost function $V(x_k)$~
\begin{equation}
x_{k+1} = x_k - \alpha \nabla_{x_k} V(x_k)
\end{equation}
where $V(x_k)$ and $\alpha > 0$ are the cost function and a step size respectively. This method assumes an isotropic state space where all directions are equally important. However, when the landscape of $V(x_k)$ is highly anisotropic (e.g., elongated level sets), standard GD can suffer from slow convergence and oscillatory behavior.\vspace{-5pt}
\subsection{Preconditioned GD}
Again for noise free linear dynamical system in \cite{esmzad2024gdl}, one can consider a preconditioning matrix $\Gamma$, which rescales the gradient update
\begin{equation}
x_{k+1} = x_k - \Gamma \nabla_{x_k} V(x_k)
\end{equation}
where $\Gamma$ is a positive definite matrix that adjusts the step size in different directions. If $\,\Gamma$ is chosen appropriately, this method can significantly accelerate convergence by better aligning the update with the geometry of $V(x_k)$. However, it requires fine-tuning. \vspace{-5pt}
\subsection{Natural GD}
In this paper, we consider linear systems subject to Gaussian noise \eqref{eq:syst} so we will give the natural GD update for $\mu_k$. The natural gradient method refines preconditioned GD by using the inverse FIM, $G(\mu_k)$, as the preconditioner
\begin{equation}
\mu_{k+1} = \mu_k - \alpha G(\mu_k)^{-1} \nabla_{\mu_k} \mathbb{E}_{x_k \sim \mathcal{N}(\mu_k, \Sigma_k)} \left[V(x_k) \right] \label{eq:ngd_ss}
\end{equation}
where $G(\mu_k)$ accounts for the curvature of the state space and $\alpha >0$. Unlike standard GD, which uses the Euclidean gradient, the natural gradient follows the Riemannian structure of the space, leading to more efficient updates. $G(\mu_k)$ can be replaced by a Hessian approximation or another metric tensor that captures the local curvature of $V(x_k)$. In control applications, such matrices arise naturally in system identification and adaptive control, where the uncertainty in the state of the system can be quantified through inverse covariance matrices as it is outlined in the next section.
\section{Natural GD for Control}\label{sec:theory}
Natural gradient methods have been widely recognized for their ability to enhance convergence rates in optimization by accounting for the curvature of the cost landscape \citep{6790500,martens2020new}. Similarly, in the context of state-space control, the use of a positive-definite preconditioning matrix \( \Gamma \) provides a powerful mechanism for improving convergence rates. The gradient preconditioning adapts the descent direction and magnitude to the underlying geometry of the system dynamics, ensuring more efficient convergence.
\subsection{Computing the FIM}
The FIM, \( G(\mu_k) \), quantifies the curvature of the likelihood \( p(x_k) \) in the state space and provides insight into the sensitivity of the likelihood to state changes. For the stochastic system dynamical system in \eqref{eq:syst}, the PDF of the state $x_k \sim \mathcal{N}(\mu_k, \Sigma_k)$ is given by the following  Gaussian distribution
{\small
\begin{align}
p(x_k)  = \frac{1}{(2\pi)^{n/2} |\Sigma_k|^{1/2}} \exp\left(-\frac{1}{2}(x_k - \mu_k)^\top \Sigma_k^{-1} (x_k - \mu_k)\right).
\end{align}
}
The log-likelihood of $p(x_k)$ is given by
\begin{align}
\log p(x_k) &= -\frac{1}{2}  (x_k - \mu_k)^\top \Sigma_k^{-1} (x_k - \mu_k) \nonumber \\
&-\frac{1}{2} \left( \log |\Sigma_k| + n \log (2\pi) \right).
\end{align}
Taking the gradient of \( \log p(x_k) \) with respect to \( \mu_k \)
\begin{align}
\nabla_{\mu_k} \log p(x_k) = \Sigma_k^{-1} (x_k - \mu_k). \label{eq:nabla_log_px}
\end{align}
Substituting \eqref{eq:nabla_log_px} into the FIM definition in \eqref{eq:FIM} yields
\begin{align}
G(\mu_k) =& \mathbb{E}_{x_k \sim \mathcal{N}(\mu_k, \Sigma_k)} \left[ \Sigma_k^{-1} (x_k - \mu_k) (x_k - \mu_k)^\top \Sigma_k^{-1} \right]\notag \\
=&\Sigma_k^{-1} \mathbb{E}_{x_k \sim \mathcal{N}(\mu_k, \Sigma_k)} \left[ (x_k - \mu_k) (x_k - \mu_k)^\top  \right]\Sigma_k^{-1}\notag\\
=&\Sigma_k^{-1} \Sigma_k \Sigma_k^{-1}=\Sigma_k^{-1}.\label{eq:G_cov}
\end{align}
% The term \( (x_k - \mu_k)(x_k - \mu_k)^\top \) has an expected value equal to the covariance \( \Sigma_k \). Thus
% \begin{align}
% G(\mu_k) = \Sigma_k^{-1} \Sigma_k \Sigma_k^{-1} = \Sigma_k^{-1}. \label{eq:G_cov}
% \end{align}
The FIM, \( G(\mu_k) \), provides a measure of the amount of information that the state \( x_k \) carries about the underlying system trajectories. Here, it reduces to the inverse covariance matrix \( \Sigma_k^{-1} \), reflecting the precision (inverse uncertainty) in the Gaussian model. We will use the covariance matrix of the state of the closed-loop system as a natural preconditioner in the GD-based control framework and in the next section, we derive explicit formulations for designing the control gain \( K \).

%This observation suggests that the covariance matrix of the state of the closed-loop system can serve as a natural preconditioner in the GD-based control framework. In the following section, we leverage the steady-state covariance as the preconditioner and derive explicit formulations for designing the control gain \( K \).
\subsection{Controller Design Using Stationary Covariance}
To simplify the derivations and facilitate the analysis of the proposed control framework, we consider the steady state covariance matrix, i.e., \(\Sigma_{k+1} = \Sigma_k = \Sigma\).  Theorem~\ref{th:theorem1} encapsulates the core contribution of this work. \vspace{6pt}
\begin{theorem} \label{th:theorem1}
Consider the system \eqref{eq:syst} and assume that $(A, B)$ is controllable. Assume that $\alpha>0$ is given and $0<\lambda<1$. Let $Y,\: F,\: \Sigma, \: M$ denote a feasible solution to the following linear problem

\begin{subequations}
\label{eq:theorem1}
\begin{align}
    AY + BF &= Y - 2\alpha \Sigma, \label{eq:abky}\\
    \begin{bmatrix}
        \lambda Y & (Y-2\alpha \Sigma)^\top \\ * & Y
    \end{bmatrix} &\succeq 0,\label{eq:lmi_stability}\\
    \begin{bmatrix}
    \Sigma - W & Y-2\alpha \Sigma \\ * & M
    \end{bmatrix} &\succeq 0, \label{eq:lmi_cov} \\
    \begin{bmatrix}
        M & Y\\ * & \Sigma
    \end{bmatrix} &\succeq 0 \label{eq:lmi_M}.
\end{align}
\end{subequations}
Then, the preconditioned natural GD control in \eqref{eq:ngd_ss} with $V(x_k)=\mathbb{E}_{x_k \sim \mathcal{N}(\mu_k, \Sigma_k)}\left[x_k^{\top}Y^{-1} x_k\right]$ and $G(\mu_k)=\Sigma^{-1}$ makes the closed-loop dynamics $A+BK$ with $K=F Y^{-1}$ $\lambda-$contractive (and thus stable).
% \begin{itemize}
%     \item the behavior of the closed-loop system can be parameterized as
% \begin{equation}
%     A+BK = I - 2 \alpha \Sigma P, \label{eq:abkngd}
% \end{equation}
% \item the feedback gain $K=F Y^{-1}$ makes the closed-loop dynamics $A+BK$ be $\lambda-$contractive (and thus stable) and behave as \eqref{eq:ngd_ss}
% \end{itemize}
\end{theorem}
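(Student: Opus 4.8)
My plan is to split the statement into a parameterization identity and a contraction certificate, and to treat the covariance constraints separately as the ingredient that legitimizes the natural-gradient interpretation. Concretely, I would first show that the update \eqref{eq:ngd_ss}, instantiated with $V(x_k)=\mathbb{E}_{x_k\sim\mathcal{N}(\mu_k,\Sigma_k)}[x_k^\top Y^{-1}x_k]$ and $G(\mu_k)=\Sigma^{-1}$, is identical to the physical mean recursion $\mu_{k+1}=(A+BK)\mu_k$ under $K=FY^{-1}$; I would then show that \eqref{eq:lmi_stability} is, through one Schur complement, exactly a $\lambda$-contraction certificate for $A+BK$.

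For the first step I would evaluate the objective in closed form. The Gaussian quadratic-form identity gives $\mathbb{E}[x_k^\top Y^{-1}x_k]=\mu_k^\top Y^{-1}\mu_k+\mathrm{tr}(Y^{-1}\Sigma_k)$, and since $\Sigma_k$ does not depend on $\mu_k$ the trace term is inert under differentiation, so $\nabla_{\mu_k}\mathbb{E}[x_k^\top Y^{-1}x_k]=2Y^{-1}\mu_k$. With $G(\mu_k)^{-1}=\Sigma$, the update \eqref{eq:ngd_ss} collapses to $\mu_{k+1}=(I-2\alpha\Sigma Y^{-1})\mu_k$. To match this with the plant, I substitute $K=FY^{-1}$ into \eqref{eq:abky}, obtaining $(A+BK)Y=Y-2\alpha\Sigma$ and hence $A+BK=I-2\alpha\Sigma Y^{-1}$, so the two recursions coincide. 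Along the way I would record the positivity facts implied by feasibility: the diagonal blocks of \eqref{eq:lmi_stability} force $Y\succeq0$, invertibility of $Y$ is built into $K=FY^{-1}$, so $Y\succ0$, and the $(1,1)$ block of \eqref{eq:lmi_cov} gives $\Sigma\succeq W\succ0$.

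For the second step I would take $P:=Y^{-1}\succ0$ as the Lyapunov matrix. Applying the Schur complement to \eqref{eq:lmi_stability} with respect to its lower-right block $Y\succ0$, and substituting $Y-2\alpha\Sigma=(A+BK)Y$ from the first step, gives $\lambda Y\succeq Y(A+BK)^\top Y^{-1}(A+BK)Y$; a congruence by $Y^{-1}$ on both sides then yields $(A+BK)^\top Y^{-1}(A+BK)\preceq\lambda Y^{-1}$, i.e. $(A+BK)^\top P(A+BK)\preceq\lambda P$. This is precisely $\lambda$-contraction of $V(\mu)=\mu^\top P\mu$ along the mean dynamics, $V(\mu_{k+1})\le\lambda V(\mu_k)$; evaluating the inequality on an eigenpair of $A+BK$ gives $\rho(A+BK)^2\le\lambda<1$, hence Schur stability.

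Finally, I would use the covariance constraints \eqref{eq:lmi_cov}--\eqref{eq:lmi_M} not for contraction, which is already established, but to justify that the $\Sigma$ entering the preconditioner $G(\mu_k)^{-1}=\Sigma$ is genuinely the stationary covariance of the closed loop, so that $G=\Sigma^{-1}$ is the correct FIM from \eqref{eq:G_cov}. The natural target is the Stein relation $\Sigma=(A+BK)\Sigma(A+BK)^\top+W$, which is bilinear in $(Y,F,\Sigma)$ once $A+BK=(Y-2\alpha\Sigma)Y^{-1}$ is substituted; the auxiliary variable $M$ is introduced as a surrogate for $Y\Sigma^{-1}Y$ so that \eqref{eq:lmi_M} and \eqref{eq:lmi_cov} express this relation through nested Schur complements while keeping the feasibility problem linear. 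I expect this linearization --- certifying that the relaxed pair \eqref{eq:lmi_cov}--\eqref{eq:lmi_M} actually fixes $\Sigma$ as the stationary covariance rather than merely bounding it, and doing so without reintroducing the bilinear term --- to be the main obstacle; by contrast the contraction claim reduces to the single Schur-complement manipulation above once \eqref{eq:abky} pins down $A+BK$.
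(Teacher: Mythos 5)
Your proof is correct and follows essentially the same route as the paper: \eqref{eq:abky} pins down $A+BK=(Y-2\alpha\Sigma)Y^{-1}$, and a Schur complement of \eqref{eq:lmi_stability} followed by congruence with $Y^{-1}$ gives $(A+BK)^\top Y^{-1}(A+BK)\preceq \lambda Y^{-1}$, which is all the stated conclusion requires. The obstacle you flag at the end is real but is also present in the paper's own argument, which only derives \eqref{eq:lmi_cov}--\eqref{eq:lmi_M} as necessary conditions (relaxations via $M\succeq Y\Sigma^{-1}Y$) of the Stein equation rather than showing that feasibility forces $\Sigma$ to be the exact stationary covariance --- fortunately the stated $\lambda$-contractivity conclusion does not depend on that.
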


\iffalse
\begin{theorem} \label{th:theorem1}
Consider the system \eqref{eq:syst} and assume that $(A, B)$ is controllable. 
Then, using \eqref{eq:ngd_ss}, the behavior of the closed-loop system can be parameterized as
\begin{equation}
    A+BK = I - 2 \alpha \Sigma P, \label{eq:abkngd}
\end{equation}
 with the steady-state covariance matrix $\Sigma$ as a preconditioner for the gradient of an unknown cost function  $V(x_k)=x_k^\top P x_k.$ Thus, for a given $\alpha > 0,$ if there exists solutions for $Y = P^{-1}$, $ F =  KY$, $\Sigma$, and $M$ in the following problem 
\begin{subequations}
\label{eq:theorem1}
\begin{align}
    AY + BF &= Y - 2\alpha \Sigma, \label{eq:abky}\\
    \begin{bmatrix}
        \lambda Y & (Y-2\alpha \Sigma)^\top \\ * & Y
    \end{bmatrix} &\succeq 0,\\
    \begin{bmatrix}
    \Sigma - W & Y-2\alpha \Sigma \\ * & M
    \end{bmatrix} &\succeq 0, \\
    \begin{bmatrix}
        M & Y\\ * & \Sigma
    \end{bmatrix} &\succeq 0,
\end{align}
\end{subequations}
then the feedback gain $K=F Y^{-1}$ makes the closed-loop be $\lambda-$contractive with $0< \lambda < 1$ (and thus stable) and behave as \eqref{eq:ngd_ss}.

Also, consider the state update rule governed by the natural GD as
\begin{equation}
    x_{k+1} = x_k - 2 \alpha \Sigma P x_k,
\end{equation}
\end{theorem}
\fi

\begin{proof}
Let $P=Y^{-1}$. Consider the following Lyapunov candidate 
\begin{align}
    V(x_k) = \mathbb{E}_{x_k \sim \mathcal{N}(\mu_k, \Sigma_k)}\left[x_k^\top P x_k\right], \quad P\succ 0.
\end{align}
and let $G(\mu_k)=\Sigma^{-1}$. As a result, the natural GD reads
\begin{equation}
   \mu_{k+1} = \mu_k - 2 \alpha \Sigma P \mu_k.
\end{equation}
By selecting $\alpha$, one can ensure the natural GD is contractive and thus stable; i.e. $\rho(I-2\alpha \Sigma P)<1$.

To design the controller gain $K$ from the preconditioned GD control design \eqref{eq:ngd_ss}, we introduce a new parametrization of the closed-loop dynamics as $A+BK=I-\alpha \Sigma P$, which is equivalent to \eqref{eq:abky}. Now, we show that $K=F Y^{-1}$ makes the mean of the closed-loop system $\lambda-$contractive.
The controller gain $K$ makes the closed loop system $\lambda$-contractive, if 
\begin{align*}
    (A+BK)^{\top}P(A+BK) \preceq \lambda P.
\end{align*}
Multiplying both sides of the above inequality with $P^{-1}=Y$ from left and right and using $A+BK=I-2\alpha \Sigma P$, one gets
\begin{align}
    (Y-2\alpha \Sigma)^\top P (Y-2\alpha \Sigma) &\preceq \lambda Y
\end{align}
which is equivalent to \eqref{eq:lmi_stability} using the Schur complement lemma. 
In the preconditioned GD control, we select the FIM as 
%the inverse of the closed-loop stationary covariance matrix; i.e. 
$G(\mu_k)=\Sigma^{-1}$. The stationary closed-loop covariance using the controller gain $K$ reads
\begin{equation}
    \Sigma = (A + BK) \Sigma (A + BK)^\top + W \label{eq:closed_cov}
\end{equation}
which couples the unknown variables $\Sigma$ and $K$ and thus should be incorporated in the design. One can write \eqref{eq:closed_cov} as
\begin{align}
    \Sigma = &(A + BK)P^{-1}P \Sigma P P^{-1} (A + BK)^\top + W, \notag\\
    =& (AY+BF)P \Sigma P  (AY + BF)^\top + W. \label{eq:closed_cov_2}
\end{align}
Let 
\begin{align}
    M \succeq Y \Sigma^{-1} Y
    \label{eq:M_lmi}
\end{align}
or equivalently $M^{-1} \preceq P \Sigma P$ meaning that there exists a $\Pi \succeq 0$ such that  $ P \Sigma P-M^{-1}-\Pi=0$. Using  $P \Sigma P=M^{-1}+\Pi$ in \eqref{eq:closed_cov_2}, one gets
\begin{align*}
    \Sigma = & (AY+BF)M^{-1}  (AY + BF)^\top + W\\
    &+(AY+BF)\Pi (AY + BF)^\top 
\end{align*}
where the last term in the right hand side is positive because $\Pi \succeq 0$. As a result, one has
\begin{align}
    \Sigma  &\succeq (Y-2\alpha \Sigma) M^{-1} (Y-2 \alpha\Sigma)^\top+W.
    \label{eq:closed_cov3}
\end{align}
Inequalities \eqref{eq:M_lmi} and \eqref{eq:closed_cov3} are equivalent to \eqref{eq:lmi_M} and \eqref{eq:lmi_cov} respectively by the Schur complement. This completes the proof.
%Thus, the feedback gain $K$ is found in a structured manner as a feasible solution to \eqref{eq:theorem1}. The designed feedback gain $K$ is aligned with the gradient-descent-inspired formulation of closed-loop control and makes the closed-loop system $\lambda$-contractive.
%Let $M^{-1} = P \Sigma P$.  This along with the inequality in \eqref{eq:closed_cov_2} can be jointly written as the LMIs in \eqref{eq:lmi_cov}-\eqref{eq:lmi_M} using the Schur complement lemma. 
\end{proof}
% \ram{If we have the following two conditions
% \begin{align*}
%     M^{-1}  &\preceq P \Sigma P, \\
%     \Sigma - W &\succeq (Y-2\alpha \Sigma) M^{-1} (Y-2 \alpha\Sigma)^\top,
% \end{align*}
% then we have 
% \begin{align*}
%     M &\succeq Y\Sigma^{-1} Y, \\
%     M &\succeq (Y-2\alpha \Sigma)^\top (\Sigma -W)^{-1} (Y-2 \alpha\Sigma)
% \end{align*}
% one can conclude that 
% \begin{align*}
%     \Sigma - W &\succeq (Y-2\alpha \Sigma) P \Sigma P (Y-2 \alpha\Sigma)^\top.
% \end{align*}
% }

\begin{theorem}[Range of $\alpha$ for closed-Loop stability]
Consider the dynamical system in \eqref{eq:syst}. Assume that $(A,B)$ is controllable. Assume that the controller gain $K$ is designed according to Theorem \ref{th:theorem1}. Then, the closed-loop dynamics $A+BK$ 
% \begin{equation}
% x_{k+1} = (A + BK)x_k,
% \end{equation}
%where $A + BK = I - 2\alpha \Sigma P$, 
is asymptotically stable if and only if
\begin{equation}
0 < \alpha < \frac{1}{\rho(\Sigma P)}. \label{eq:landamSP}
\end{equation}
\end{theorem}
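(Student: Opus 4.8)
The plan is to exploit the explicit parameterization of the closed-loop matrix established in Theorem~\ref{th:theorem1}, namely $A+BK = I - 2\alpha\Sigma P$ with $P = Y^{-1} \succ 0$ and $\Sigma \succ 0$, and to characterize asymptotic stability directly through the spectrum of this matrix. Since the system is discrete-time and LTI, $A+BK$ is asymptotically stable if and only if $\rho(A+BK) < 1$, so the whole argument reduces to understanding the eigenvalues of $I - 2\alpha\Sigma P$ as a function of $\alpha$.

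First I would establish that the product $\Sigma P$ has real and strictly positive eigenvalues. Although $\Sigma P$ is not symmetric in general, the similarity transform $\Sigma^{-1/2}(\Sigma P)\Sigma^{1/2} = \Sigma^{1/2} P \Sigma^{1/2}$ produces a symmetric positive-definite matrix (using $\Sigma, P \succ 0$), so $\Sigma P$ is similar to an SPD matrix and consequently $\lambda_i(\Sigma P) > 0$ for all $i$, with $\rho(\Sigma P) = \lambda_{\max}(\Sigma P)$. Because these eigenvalues are real, the closed-loop eigenvalues are simply $\lambda_i(A+BK) = 1 - 2\alpha\lambda_i(\Sigma P)$, which are themselves real.

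Next I would translate the stability condition $\rho(A+BK) < 1$ into a condition on $\alpha$. Since all closed-loop eigenvalues are real, stability is equivalent to $|1 - 2\alpha\lambda_i(\Sigma P)| < 1$ holding for every $i$. Expanding this absolute-value inequality gives $0 < \alpha\lambda_i(\Sigma P) < 1$; the positivity of $\lambda_i(\Sigma P)$ makes the left inequality equivalent to $\alpha > 0$, while the right inequality reads $\alpha < 1/\lambda_i(\Sigma P)$. Intersecting these constraints over all $i$, the binding bound comes from the largest eigenvalue, yielding $0 < \alpha < 1/\lambda_{\max}(\Sigma P) = 1/\rho(\Sigma P)$, which is exactly \eqref{eq:landamSP}. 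Because this chain of equivalences is reversible at every step, it delivers both directions of the ``if and only if''.

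The main obstacle, and the only nontrivial step, is the spectral claim that $\Sigma P$ has real positive eigenvalues; everything downstream is an elementary scalar inequality once the eigenvalues of $A+BK$ are known to be real. I would secure this via the SPD similarity argument above, or equivalently by invoking the standard fact that the product of two SPD matrices is diagonalizable with positive spectrum. A minor point to state carefully is that $\rho(\Sigma P)$ here coincides with $\lambda_{\max}(\Sigma P)$ precisely because the spectrum is real and positive, so no complex or negative eigenvalue can dominate in modulus and produce a spurious bound.
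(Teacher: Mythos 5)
Your proposal is correct and follows essentially the same route as the paper's proof: parameterize $A+BK = I-2\alpha\Sigma P$, show $\Sigma P$ has real positive spectrum via similarity to a symmetric positive-definite matrix, and reduce stability to the scalar inequality $|1-2\alpha\lambda_i(\Sigma P)|<1$. The only cosmetic difference is your choice of $\Sigma^{1/2}$ rather than the paper's $P^{-1/2}$ as the similarity transform, which changes nothing.
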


\begin{proof}
The eigenvalues of $A + BK$ are given by
\begin{equation}
\lambda_i(A + BK) = \lambda_i(I - 2\alpha \Sigma P)= 1 - 2\alpha \lambda_i(\Sigma P), \quad \forall i.
\end{equation}
Stability requires
\begin{equation}
|1 - 2\alpha \lambda_i(\Sigma P)| < 1, \quad \forall i. \label{eq:alpha_stable_condition}
\end{equation}
Note that the eigenvalues of $\Sigma P$; i.e. $\lambda_i(\Sigma P)$ are all real, and that is because $\Sigma P$ is similar to a positive definite matrix. To see this point, take the similarity transform as $T=P^{-\frac{1}{2}}$. Then $T^{-1}\Sigma P T= P^{\frac{1}{2}} \Sigma P^{\frac{1}{2}}$ which is a positive definite matrix and has real eigenvalues. Since $\Sigma P$ is similar to $P^{\frac{1}{2}} \Sigma P^{\frac{1}{2}}$,  $\lambda_i(\Sigma P)$ are also real. As a result, \eqref{eq:alpha_stable_condition} is simplified to
\begin{equation*}
-1 <1 - 2\alpha \lambda_i(\Sigma P) < 1, \rightarrow 0<\alpha <\frac{1}{\lambda_i(\Sigma P)},\quad \forall i 
\end{equation*} 
which simplifies to \eqref{eq:landamSP} 
%using Theorem 2.9 in ~\citep{FB-CTDS} \ram{and the fact that $\Sigma P \succeq 0$}.
%Thus, the system is stable if the step size $\alpha$ satisfies this bound. 
\end{proof}
\begin{remark}
Note that the range for $\alpha$ in \eqref{eq:landamSP} cannot be verified before design as the values of $\Sigma$ and $P$ depend on the value of $\alpha$. However, this range tells us that smaller values of $\alpha $ will decrease the convergence rate and larger values of $\alpha$ will increase the convergence speed. So, $\alpha$ can serve as an intuitive closed-loop behavior-tweaking variable. 
\end{remark}
\begin{corollary}
\textit{(Sensitivity to Step Size).}  
The sensitivity of the closed-loop eigenvalues to the step size $\alpha$ is given by
\begin{equation}
\frac{\partial \lambda_i}{\partial \alpha} = -2\lambda_{\Sigma P,i}. \label{eq:alpha_sens}
\end{equation}
The sensitivity of eigenvalues increases linearly with $\alpha$. 
% For robustness to small perturbations, $\alpha$ must satisfy
% \begin{equation}
% \alpha < \frac{1}{\rho(\Sigma P)}.
% \end{equation}
\end{corollary}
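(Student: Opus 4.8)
The plan is to derive the sensitivity formula directly from the spectral identity already established in the proof of the preceding theorem, namely $\lambda_i(A+BK) = 1 - 2\alpha\,\lambda_{\Sigma P,i}$ for every $i$, where $\lambda_{\Sigma P,i}:=\lambda_i(\Sigma P)$. Since this identity writes each closed-loop eigenvalue as an explicit affine function of the step size, the first and essentially only step is to read off the eigenvalue map $\alpha \mapsto \lambda_i(A+BK)$ and differentiate it with respect to $\alpha$, treating the proportionality coefficient $\lambda_{\Sigma P,i}$ as fixed. This immediately yields $\partial \lambda_i/\partial\alpha = -2\lambda_{\Sigma P,i}$, which is exactly \eqref{eq:alpha_sens}. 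The corollary is thus an immediate consequence of the affine dependence, so no heavy machinery is required.

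Before differentiating I would record two facts that make the step legitimate. First, the quantities $\lambda_{\Sigma P,i}$ are guaranteed to be real: via the similarity transform $T=P^{-1/2}$ used in the previous proof, the matrix $\Sigma P$ is similar to the symmetric positive-definite matrix $P^{1/2}\Sigma P^{1/2}$, so its spectrum is real and the partial derivative is a well-defined real number. Second, on the stability interval $0<\alpha<1/\rho(\Sigma P)$ supplied by \eqref{eq:landamSP}, each branch $\lambda_i(\alpha)$ is a smooth (indeed affine) function of $\alpha$, so there are no eigenvalue-crossing or non-differentiability issues and the branch-wise derivative is unambiguous.

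The main subtlety, and the point I would treat carefully, is the implicit dependence flagged in the Remark: both $\Sigma$ and $P$ are themselves outputs of the feasibility problem \eqref{eq:theorem1} and therefore vary with $\alpha$. A fully rigorous total-derivative analysis would have to account for $\partial(\Sigma P)/\partial\alpha$ through the linear matrix inequality constraints, which is intractable in closed form. My reading is that the corollary is a local, first-order statement that isolates the \emph{explicit} dependence of the eigenvalue map on $\alpha$ for a fixed design point $(\Sigma,P)$; under this interpretation the coefficient $\lambda_{\Sigma P,i}$ is frozen and \eqref{eq:alpha_sens} holds verbatim. I would state this assumption explicitly. Finally, I would reconcile the accompanying phrase that the sensitivity \emph{grows linearly} with $\alpha$: since the slope $-2\lambda_{\Sigma P,i}$ is actually constant, the linear-growth claim is most naturally read as referring to the eigenvalue deviation $|\lambda_i(A+BK)-1| = 2\alpha\,|\lambda_{\Sigma P,i}|$, so the two statements are consistent once the relevant quantity is named precisely.
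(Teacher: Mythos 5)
Your proposal is correct and follows essentially the same route as the paper: both differentiate the affine relation $\lambda_i = 1 - 2\alpha\,\lambda_{\Sigma P,i}$ with respect to $\alpha$, holding $\Sigma P$ fixed. Your additional caveats (realness of the spectrum, the frozen-design-point interpretation, and the reading of the ``grows linearly'' phrase) are sensible clarifications but do not change the argument.
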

\begin{proof}
The eigenvalues of $A + BK$ are
\begin{equation}
\lambda_i = 1 - 2\alpha \lambda_{\Sigma P,i}.
\end{equation}
Differentiating with respect to $\alpha$ results in \eqref{eq:alpha_sens}.
%To ensure robustness, limit the step size to ensure small eigenvalue changes.
\end{proof}

\subsection{Discussion}
% \noindent\textbf{Comparison with Stability LMIs.}
Stability conditions ensure the existence of a Lyapunov function that decreases along system trajectories, guaranteeing stability. However, these conditions do not necessarily provide an explicit mechanism for \emph{shaping} the trajectories themselves. In contrast, our framework leverages a GD-like formulation to directly influence how the state evolves at each step, thereby offering more transparency over closed-loop behavior. 
% \noindent\textbf{Connection to Optimal Control and LQR.}

Classical LQR designs the gain matrix by minimizing a quadratic cost function, but there is no straightforward, systematic way to target specific \emph{transient} behaviors beyond fine-tuning the weighting matrices \(Q\) and \(R\). The proposed natural GD approach, on the other hand, utilizes a step-size parameter \(\alpha\) and a preconditioning (covariance) matrix \(\Sigma\) to shape the trajectory explicitly. Adjusting \(\alpha\) translates directly into changing the rate of convergence ($\Sigma$ defines how strongly each direction in the state space is scaled in the gradient step)
where \textbf{smaller \(\alpha\)} yields smoother but slower convergence, 
 and \textbf{larger \(\alpha\)} leads to faster convergence, albeit at the risk of oscillations if too large.

\begin{figure}[!b]
\centerline{\includegraphics[width=0.3\columnwidth]{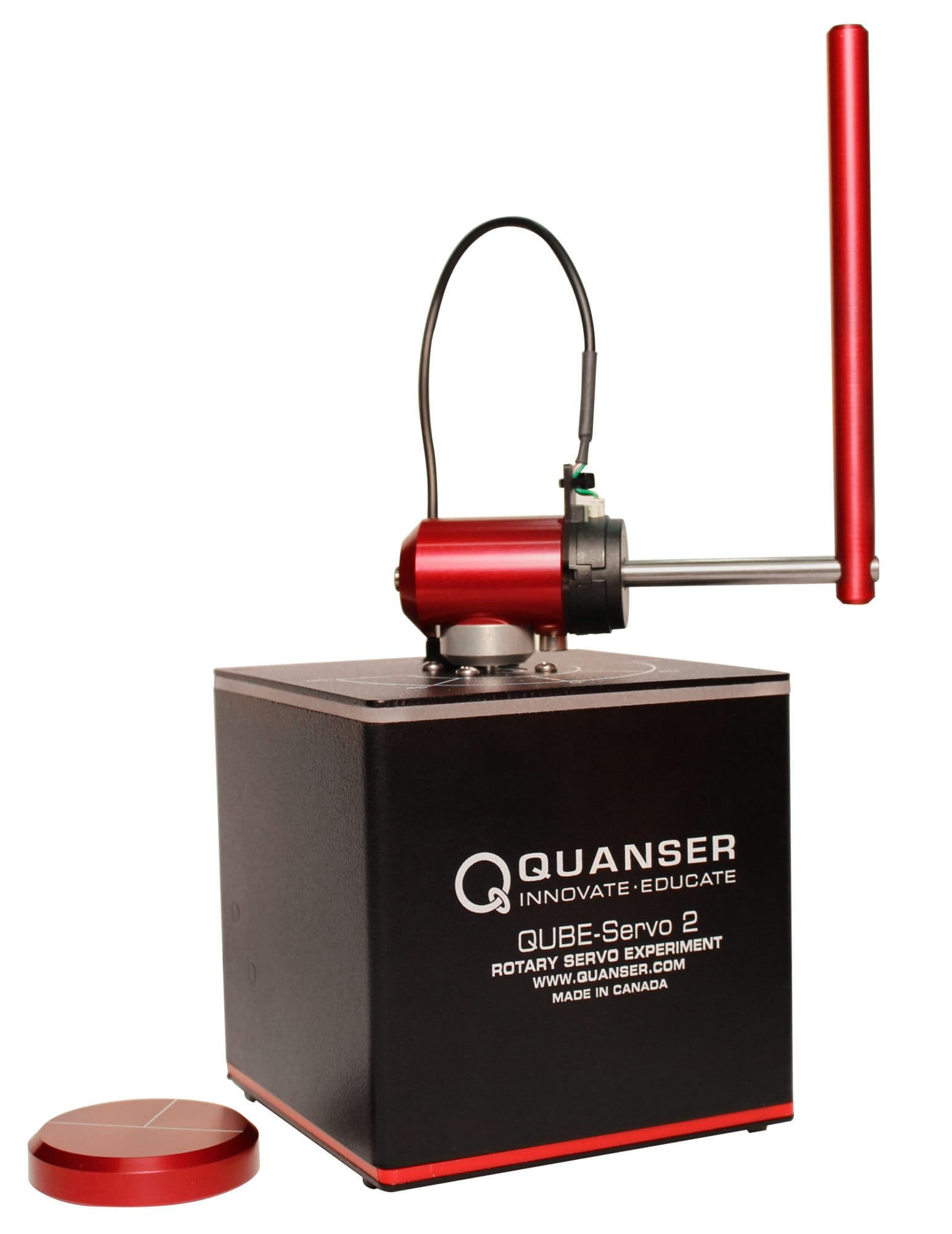}}
\caption{Quanser's Qube-Servo 2 platform}
% \vspace{10pt}
\label{fig:quanser}
% \vspace{-10pt}
\end{figure}
 
In contrast to methods where the cost function is purely chosen by the designer, our approach \emph{derives} a cost function \emph{by design} from the natural GD-based parameterization. This decouples the choice of cost function from the low-level behavior.

% A distinguishing feature of this approach is that the \emph{closed-loop covariance} $\Sigma$ naturally appears as a preconditioning matrix for the gradient. Concretely, the gradient steps are scaled according to the steady state uncertainty. At steady state, $\Sigma$ encodes how external disturbances propagate through the closed-loop dynamics, and using $\Sigma^{-1}$ for preconditioning ensures that control actions are sensitive to directions of lower uncertainty. This explicit link between covariance and control updates is absent in standard stability LMIs and typical LQR treatments.

\section{Simulation Results}\label{sec:simulations}
To evaluate the proposed natural gradient control method's performance, we conducted simulations on a widely used benchmark platform \textit{Quanser rotary pendulum (model QUBE-Servo 2)}
\footnote{\url{https://www.quanser.com/products/qube-servo-2/}} 
shown in Figure \ref{fig:quanser}. The system dynamics are represented in discrete time with a sampling interval of \( T_s = 0.01 \) seconds. The state vector is defined as $x = \begin{bmatrix} \theta & \delta & \dot{\theta}   & \dot{\delta} \end{bmatrix}^T$,
where \( \theta \) (rad) is the rotary arm angle and \( \delta \) (rad) is the pendulum angle.

\begin{figure}[!t]
\centerline{\includegraphics[width=1\columnwidth]{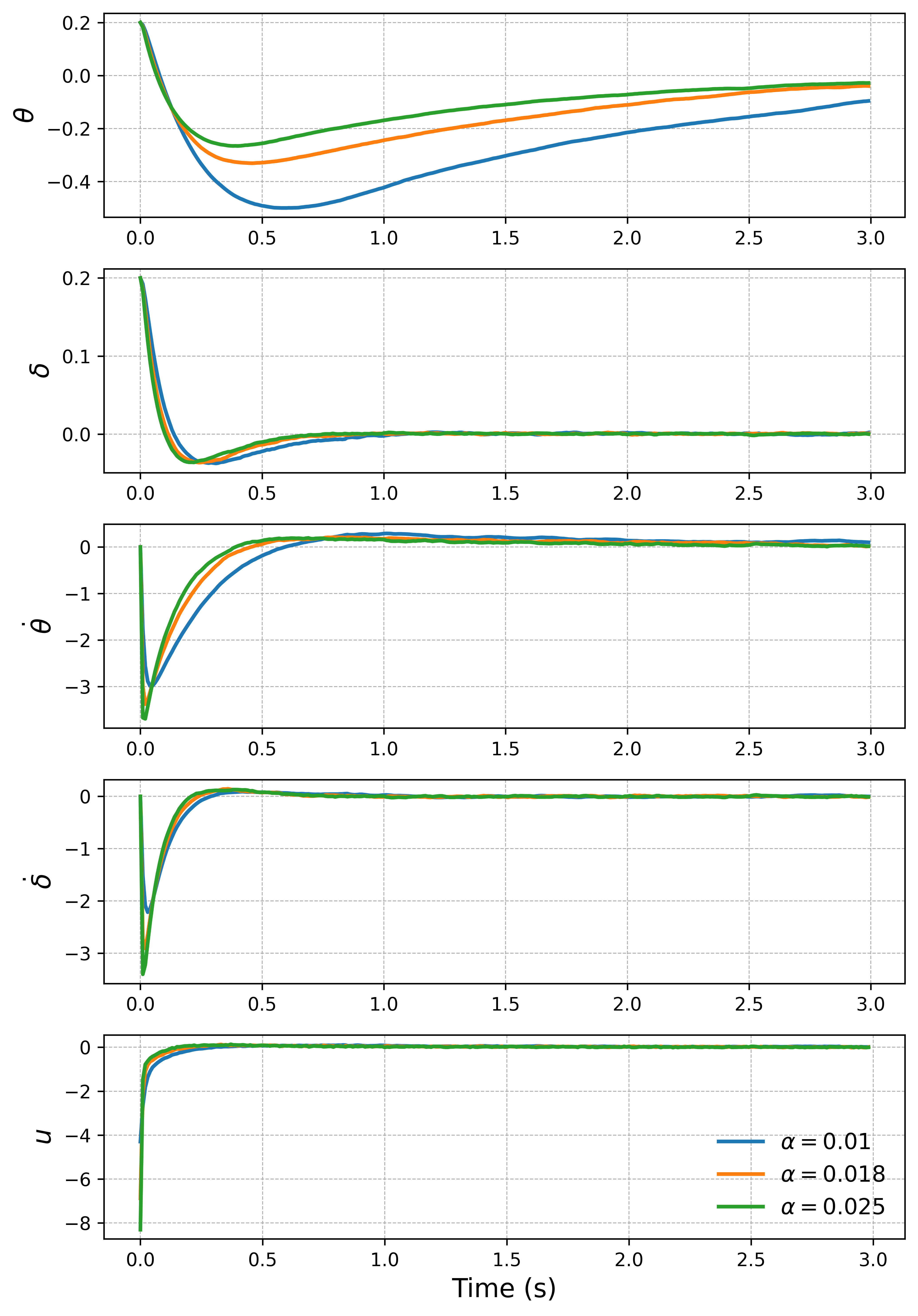}}
\caption{The evolution of the state variables (top four plots) and the control input (the fifth plot) using the natural GD design for various $\alpha$.}
% \vspace{10pt}
\label{fig:ngd}
\end{figure}

\begin{figure}[!t]
\centerline{\includegraphics[width=1\columnwidth]{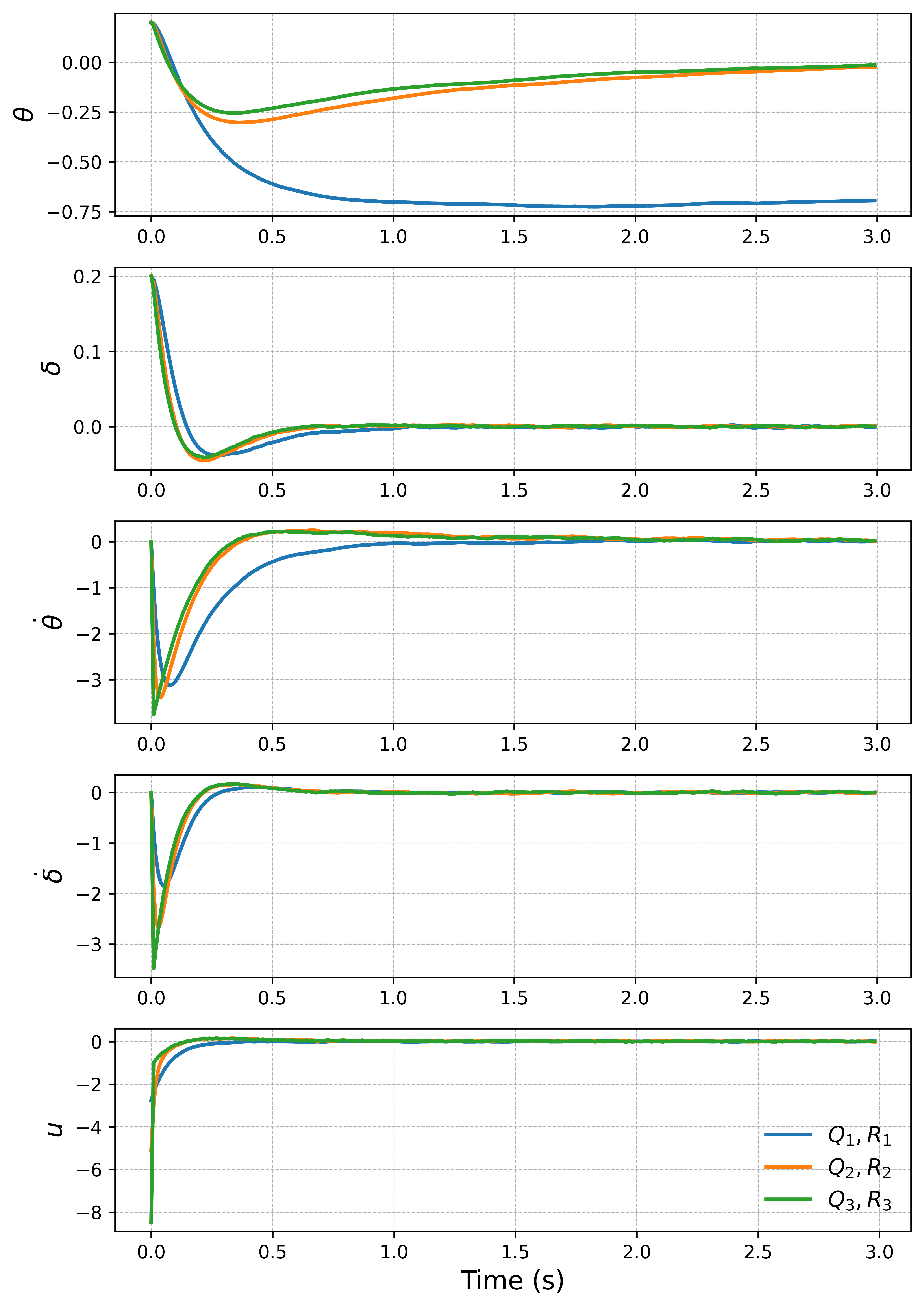}}
\caption{The evolution of the state variables (top four plots) and the control input (the fifth plot) using LQR approach with various $Q$ and $R$.}
% \vspace{10pt}
\label{fig:lqr}
\end{figure}

The control input \( u  \) (V) is the voltage applied to the motor. Here, we considered linearized dynamics around its upright equilibrium point.
The control strategy aims to stabilize the pendulum in its upright equilibrium while ensuring smooth control effort. Both the natural gradient control method and LQR are employed to regulate the trajectory of the state vector. The discrete-time state-space matrices used in the simulation are given by
\begin{align*}
A &=
\begin{bmatrix}
1.0 &0.0073 & 0.0094 & 0 \\
0 & 1.012 & -0.0006 & 0.01\\
0 & 1.43 & 0.89 & 0.0026 \\
0 & 2.55 & -0.12 & 1.00
\end{bmatrix}, 
B &=
\begin{bmatrix}
0.0024 \\
0.0024 \\
0.48  \\
0.47
\end{bmatrix}.   
\end{align*}
We compared the natural gradient control method~\eqref{eq:theorem1} for different values of the step size parameter $\alpha$ with the standard LQR for various choices of weighting matrices $Q$ and $R$.
For the natural gradient control method, we choose $\lambda=0.99$, and the step size values
$\alpha \in \{0.01, 0.018, 0.025\}.
$
For the LQR controller, we considered different state and control weighting matrices
$Q_1 = 0.001 I$,  $Q_2 = I$, $Q_3 = 5000 I$,$R_1 = 100$, $R_2=1$, and $R_3 =  0.001.$
The comparison between the natural GD approach and LQR control highlights key differences in trajectory shaping and optimal control. In the natural GD method, state evolution is influenced by the step size parameter 
\( \alpha \), with larger values leading to faster convergence but potentially introducing oscillations. This approach provides direct control over trajectory shaping, allowing for intuitive adjustments to the system’s behavior. 

Figures \ref{fig:ngd} and \ref{fig:lqr} show the results of the proposed and LQR approaches, respectively. LQR control optimally balances state regulation and control effort through the careful selection of weighting matrices \( Q \) and \( R \), ensuring smooth and stable convergence. However, a fundamental limitation of LQR is that the trajectory behavior is not explicitly predictable based on \( Q \) and \( R \) adjustments. While these matrices influence the control law, their effect on the system’s transient response is often indirect and requires trial-and-error tuning. Consequently, achieving a specific trajectory shape in LQR can be challenging, as the resulting state evolution emerges from the Riccati equation rather than being explicitly controlled.
On the other hand, the GD-based approach offers a geometric perspective on control, where trajectory updates follow a well-defined GD process. This enables a more transparent and predictable way to shape trajectories directly. If precise trajectory shaping is the priority, the natural GD approach provides greater interpretability. In contrast, LQR remains a reliable choice for energy-efficient and well-balanced state regulation but at the cost of requiring heuristic tuning for transient behavior control.

% Further experiments can be conducted to explore robustness against external disturbances and model uncertainties.

\section{Conclusion and Future Directions}\label{sec:conclusion}
This paper introduced a novel closed-loop control framework based on natural GD, using the closed-loop covariance matrix as a preconditioner. By directly shaping system trajectories through a gradient-descent-like update, the proposed approach eliminates the need for indirect cost-function tuning, offering improved interpretability. Theoretical analysis established stability conditions and step-size constraints.
% , ensuring robust implementation. 

Future research directions include extending the framework to nonlinear and time-varying systems, incorporating state-dependent covariance adaptation, and exploring its integration with reinforcement learning and model predictive control. Additionally, investigating robustness properties under model uncertainties and external disturbances could further enhance the practical applicability of this approach in real-world scenarios.

\bibliography{ifacconf}             % bib file to produce the bibliography
                                                     % with bibtex (preferred)
                                                   
\end{document}